\newcommand{\comment}[1]{}
\newcommand{\tr}{{\rm Tr}}
\newcommand{\ii}{{\rm i}}
\newcommand{\nn}{\nonumber\\}
\newcommand{\ket}[1]{| #1 \rangle}
\newcommand{\bra}[1]{\langle #1 |}
\newcommand{\matele}[3]{\langle #1 | #2 | #3 \rangle}
\begin{document}

\theoremstyle{plain}
\newtheorem{theorem}{Theorem}
\newtheorem{lemma}[theorem]{Lemma}
\newtheorem{corollary}[theorem]{Corollary}
\newtheorem{conjecture}[theorem]{Conjecture}
\newtheorem{proposition}[theorem]{Proposition}

\theoremstyle{definition}
\newtheorem{definition}{Definition}

\theoremstyle{remark}
\newtheorem*{remark}{Remark}
\newtheorem{example}{Example}

\title{Mutually unbiased measurements  in finite dimensions}

\author{Amir Kalev}
\affiliation{Center for Quantum Information and Control, University of New Mexico, Albuquerque, NM 87131-0001, USA}

\author{Gilad Gour}
\affiliation{Institute for Quantum Science and Technology, University of Calgary, 2500 University Drive NW, Calgary, Alberta, Canada T2N 1N4}

\date{\today}

\begin{abstract}
We generalize the concept of mutually unbiased bases  (MUB) to measurements which are not necessarily described by rank one projectors. As such, these measurements can be a useful tool to study the long standing problem of the existence of MUB. We derive their general form, and show that in a finite, $d$-dimensional Hilbert space, one can construct a complete set of $d+1$ mutually unbiased measurements. Beside of their intrinsic link to MUB, we show, that these measurements' statistics provide complete information about the state of the system. Moreover, they  capture the physical essence of unbiasedness, and in particular, they satisfy non-trivial entropic uncertainty relation similar to $d+1$ MUB. 
\end{abstract}

\pacs{03.67.Mn, 03.67.Hk, 03.65.Ud}
%\pacs{03.65.Ta, 03.65.Wj, 03.67.-a}

\maketitle

%\section{Introduction}
Unbiased bases are a fundamental concept in the theory quantum kinematics as they are intimately  related to Bohr's  Complementarity Principle \cite{englertbook}. In a finite dimensional Hilbert space, two orthonormal bases, are said to be unbiased if, and only if,  the transition probability from any state of one basis to any state of the second basis is constant, i.e., independent of the chosen states. In a  $d$-dimensional Hilbert space there are at most $d{+}1$ bases which are pairwise unbiased \cite{ivanovic81}. This set is called the set of mutually unbiased bases (MUB).  

MUB are studied in various context in quantum mechanics. They are used in thought experiments such as the so called ``mean king problem''~\cite{engl01,arav03}, they were shown to have interesting connections with symmetric informationally complete positive-operator-valued measures~\cite{woot06}, complex $t$-designs~\cite{klap05,gros07}, and with graph state formalism~\cite{spen13}.  Beyond being of fundamental interest, MUB have practical importance as well. They play an important role in quantum error correction codes~\cite{gott96,cald97}, quantum cryptography for secure quantum key exchange~\cite{brub02,cerf02}, quantum state tomography~\cite{wootters89,adam10},  and more recently in the detection of quantum entanglement~\cite{speng12}. 

It is therefore been of great effort and research interest to construct the complete set of MUB. To date, numerous construction methods of complete sets of MUB are known in power of prime dimensions \cite{ivanovic81,wootters89,tal02,klap04,durt05,durt10,spen13}, and each method provides a useful and different insight for the problem of the existence of MUB. Alas, it is still not known whether complete sets of MUB exist in all finite dimensions.  A strong numerical evidence suggests that they do not exist in dimensions which are not power of prime \cite{grassl04,brie08,brie10,raynal11}. 

In this work we propose a new approach for `unbiasedness', by generalizing this notion from  bases to  measurements. More particularly, we consider unbiased measurements in a $d$-dimensional Hilbert space, such that unbiased bases are a special, limiting, case. In fact, this idea enables us to construct {\it all} possible complete sets of $d+1$ mutually unbiased measurements (MUM) in a $d$-dimensional Hilbert space. Therefore, if a complete set of MUB exists, it must be a particular case of this construction. Naturally, this generalization can be useful to study questions  relevant to MUB. One interesting question, for example, is how close can $d+1$ MUM get to a complete set of MUB in a given dimension, say, in dimension 6? Beside their relevance to MUB, MUM may be of interest of their own. We show that they provide a linear inversion formula for quantum state tomography, and that they abide by entropic uncertainty relation, similar to MUB. 

%\section{Construction of MU POM in finite-dimensional Hilbert space}
In order to formulate the notion of MUM, we first briefly recall the measurement formalism in quantum mechanics.
Generally, a measurement in quantum mechanics is described by a set of positive operators (sometimes called measures) $M_j\geq0$ that sum to the identity operator,  ${\sum_j}{M_j}=1$. The probability of the $j$th outcome is given by Born's rule, $\tr(M_j\rho)$, where $\rho$ is the state (statistical operator) of the quantum system. This representation of a measurement is therefore called a probability-operator measurement (POM), or equivalently a positive-operator-valued measure (POVM). 

Clearly a basis of a finite-dimensional Hilbert space defines both a set of states and a measurement. In particular,  consider a set of $d+1$ MUB  in a $d$-dimensional Hilbert space, $\{\ket{\psi_{n}^{(b)}}\}$ where $b=1,2,\ldots,d+1$ labels the basis while $n=1,2,\ldots, d$ labels the vectors within a basis. The set of projectors on the $b$th basis vectors, ${\cal B}^{(b)}=\{B^{(b)}_n=\ket{\psi_{n}^{(b)}}\bra{\psi_{n}^{(b)}}|n=1,2,\ldots, d\}$, form a measurement, $B^{(b)}_n\geq0$, and $\sum_nB^{(b)}_n=1$, $\forall b$, with the defining properties of MUB,
\begin{align}\label{mubPOM}
\tr (B^{(b)}_n)=&1,\nonumber\\
\tr (B^{(b)}_n B^{(b')}_{n'})=&\delta_{n,n'}\delta_{b,b'}+(1-\delta_{b,b'})\frac1{d}.
\end{align}
The $B^{(b)}_n$s can be thus regarded as quantum states and as measurement operators. Therefore, the unbiasedness of two bases could be re-stated as a property of two measurements as follows: In a $d$-dimensional Hilbert space, measurements of two  bases are  unbiased if and only if when the system is in any state of one basis (measurement), the probability distribution upon measuring it in the second measurement, is completely random.

The notion of unbiasedness of measurements of bases can be therefore generalized to general measurements.\\
\begin{definition}
Two measurements on a $d$-dimensional Hilbert space, ${\cal P}^{(b)}=\{P^{(b)}_n|P^{(b)}_n\geq0,\;\sum_{n=1}^{d}P^{(b)}_n=1\}$, $b{=}1,2$, with $d$ elements each,  are said to be \emph{mutually unbiased measurements} (MUM) if, and only if,
\begin{align}\label{muPOM}
\tr (P^{(b)}_n)&=1,\nonumber\\
\tr (P^{(b)}_n P^{(b')}_{n'})&=\delta_{n,n'}\delta_{b,b'}\kappa+(1-\delta_{n,n'})\delta_{b,b'}\frac{1-\kappa}{d-1}\nn&+(1-\delta_{b,b'})\frac1{d}.
\end{align}
\end{definition}
According to this definition, each measurement operator $P^{(b)}_n$ can be also regarded as a quantum state, for which $\tr(P^{(b)}_n P^{(b')}_{n'})=1/d$, $\forall b\neq b'$. Therefore, indeed,  if the system is in any state, say of ${\cal P}^{(1)}$, then the probability distribution of measuring it in a second measurement, say ${\cal P}^{(2)}$, is completely random ($1/d$).

The inner product of two elements within the same measurement depends on the {\it efficiency parameter} $\kappa$. The value of this parameter determines how close the measurements operators to rank one projectors, i.e., to MUB. The latter is obtained for $\kappa=1$. The other extreme is $\kappa=\frac1{d}$, which corresponds to the trivial case where all the measurement operators are equal to the completely mixed state. We therefore conclude that the efficiency parameter satisfies, \mbox{$\frac1{d}<{\kappa}\leq1$}~\cite{footnote2}. 

Before moving on, we note that, within this definition, the purity of the states (that is, of the measurements' operators) $P^{(b)}_n$ is a constant equals to $\kappa$. One may consider more general definitions of MUM in which the purity of the states depends, for example, on the measurement label $b$ and on the outcome label $n$, $\kappa_n^{(b)}$. These definitions, however, result in less symmetric MUM, which are the primer objective of our study. 

The definition  above allows us to construct $d+1$ MUM in a $d$-dimensional Hilbert space. Consider an orthogonal basis, $F_k$, for the space of hermitian, traceless operators acting on a $d$-dimensional Hilbert space. Such a basis is composed of $d^2-1$ operators, $F_k=F_k^\dagger$, satisfy $\tr(F_k)=0$,  and $\tr(F_k F_l)=\delta_{k,l}$.  We arrange the basis elements on a grid of $d-1$ columns and $d+1$ rows,
\begin{align}
\begin{matrix}
F_1 & F_2 & \cdots &F_{d-1} \\
F_d & F_{d+1} & \cdots &F_{2(d-1)} \\
\vdots & \vdots & \vdots &\vdots \\
\cdots & \cdots & \cdots &F_{(d+1)(d-1)}.
 \end{matrix}
  \end{align}
It is convenient to re-label the operators by a double index $(n,b)$ according to their (column, row) location $n=1,2,\ldots,d-1$, $b=1,2,\ldots,d+1$,
\begin{align}\label{Fs}
\begin{matrix}
  F_{1,1} & F_{2,1} & \cdots &F_{d-1,1}\\
  F_{1,2} & F_{2,2} & \cdots &F_{d-1,2}\\
  \vdots & \vdots & \vdots &\vdots \\
  F_{1,d+1} & F_{2,d+1} & \cdots &F_{d-1,d+1}.
 \end{matrix}
  \end{align}
Next, we define the $d(d+1)$ operators
\begin{align}\label{Fbv}
F^{(b)}_n=\begin{cases}
   F^{(b)}-(d+\sqrt{d}) F_{n,b}& \text{for } n=1,2,\ldots,d-1 \\
   (1+\sqrt{d}) F^{(b)}       & \text{for } n =d
  \end{cases}
\end{align}
with $b=1,2,\ldots,d+1$, and $F^{(b)}$ being the sum of the basis elements on the $b$th row, i.e., $F^{(b)}=\sum_{n=1}^{d-1}F_{n,b}$. This definition ensures the properties,
 \begin{align}\label{sumFbv}
\tr(F^{(b)}_n F^{(b)}_{n'})&=(1+\sqrt{d})^2[\delta_{nn'}(d-1)-(1-\delta_{nn'})],\nn
\sum_{n=1}^d F^{(b)}_n&=0,
\end{align}
which will be used later. We note by passing that by construction, 
\begin{equation}
\tr(F^{(b)}_n F^{(b')}_{n'})=0, \forall b\neq b', \;\forall n,n'=1,2,\ldots,d.
\end{equation}

\begin{theorem}\label{tm:1}
The operators, 
\begin{equation}\label{PinF}
P^{(b)}_n=\frac1{d}+t F^{(b)}_n,
\end{equation}
with $b=1,2,\ldots,d+1$, $n=1,2,\ldots,d$, and the free parameter $t$ chosen such that $P^{(b)}_n\geq0$, form $d+1$ MUM in a $d$-dimensional Hilbert space, where $b$ labels the measurement and $n$ labels the outcome. Moreover, any complete set of MUM have this form.
\end{theorem}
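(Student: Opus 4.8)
The plan is to establish the two halves of the statement separately: first that the family \eqref{PinF} obeys the MUM conditions \eqref{muPOM}, and then that every complete set of MUM is of this form. For the first half I would just substitute $P^{(b)}_n=\frac1d+tF^{(b)}_n$ into \eqref{muPOM} and invoke the identities already recorded in \eqref{sumFbv} together with $\tr(F^{(b)}_nF^{(b')}_{n'})=0$ for $b\neq b'$. Since each $F^{(b)}_n$ is traceless, $\tr P^{(b)}_n=1$ and $\sum_nP^{(b)}_n=1$ are immediate, and the terms linear in $t$ drop out of $\tr(P^{(b)}_nP^{(b')}_{n'})=\frac1d+t^2\tr(F^{(b)}_nF^{(b')}_{n'})$. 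Plugging in the three cases of \eqref{sumFbv} gives $\frac1d$ for $b\neq b'$, the common value $\kappa:=\frac1d+t^2(1+\sqrt d)^2(d-1)$ for $b=b',\,n=n'$, and $\frac1d-t^2(1+\sqrt d)^2$ for $b=b',\,n\neq n'$; a one-line check shows the last equals $\frac{1-\kappa}{d-1}$. Positivity $P^{(b)}_n\ge0$ holds for all sufficiently small $|t|$, so the admissible $t$ form a nonempty interval and $\kappa$ takes values in $(\frac1d,1]$. This part is routine bookkeeping.

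For the converse, let $\{\mathcal{P}^{(b)}\}_{b=1}^{d+1}$ be an arbitrary complete set of MUM. The common efficiency parameter $\kappa=\tr((P^{(b)}_n)^2)$ is automatically the same for every pair, since it refers to a single measurement. Set $G^{(b)}_n:=P^{(b)}_n-\frac1d$; these are Hermitian, traceless, and $\sum_{n=1}^dG^{(b)}_n=0$. Because $\tr(\frac1d\,G)=0$, the conditions \eqref{muPOM} become Gram relations for the $G$'s: $\tr(G^{(b)}_nG^{(b')}_{n'})=0$ whenever $b\neq b'$, and within a fixed $b$ the Gram matrix is $\frac{c}{d-1}(dI-J)$ with $c:=\kappa-\frac1d>0$ and $J$ the all-ones matrix, hence of rank $d-1$. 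So the $(d+1)(d-1)=d^2-1$ operators $G^{(b)}_n$ span the space of traceless Hermitian operators and split it into $d+1$ mutually orthogonal $(d-1)$-dimensional blocks.

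The decisive step is to reconstruct an orthonormal basis $\{F_{n,b}\}$ by inverting \eqref{Fbv}. Put $t:=\sqrt{c/[(1+\sqrt d)^2(d-1)]}>0$, $F^{(b)}:=\frac1{t(1+\sqrt d)}G^{(b)}_d$, and, for $n=1,\dots,d-1$, $F_{n,b}:=\frac1{d+\sqrt d}\bigl(F^{(b)}-\frac1tG^{(b)}_n\bigr)$. Using $\sum_{n=1}^{d-1}G^{(b)}_n=-G^{(b)}_d$ one gets $\sum_{n=1}^{d-1}F_{n,b}=F^{(b)}$, and then $G^{(b)}_n=tF^{(b)}_n$ for all $n$ by construction, i.e. $P^{(b)}_n=\frac1d+tF^{(b)}_n$. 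What remains is to verify that the $F_{n,b}$ are traceless and orthonormal: tracelessness is clear, $\tr(F_{n,b}F_{m,b'})=0$ for $b\neq b'$ is inherited from the cross-measurement orthogonality, and computing $\tr(F_{n,b}F_{m,b})$ from the within-measurement Gram matrix collapses to $\delta_{nm}$ after using $(d+\sqrt d)^2=d(1+\sqrt d)^2$. Being $d^2-1$ orthonormal elements they form a basis of the traceless Hermitian operators, which is exactly the data entering \eqref{Fs}--\eqref{PinF}.

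I expect the real obstacle to be this reconstruction step: one has to guess the correct inversion of \eqref{Fbv} and then see that the particular weights $(d+\sqrt d)$ and $(1+\sqrt d)$ conspire to make the reconstructed operators orthonormal — the theorem is essentially the assertion that the Gram data of a complete set of MUM leaves no freedom beyond the choice of such a basis and the single scalar $t$. The forward direction and the passage to Gram matrices, by contrast, are straightforward.
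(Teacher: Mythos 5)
Your proof is correct and takes essentially the same route the paper only sketches: direct substitution into the MUM conditions \eqref{muPOM} using \eqref{sumFbv} for the forward direction, and, for the converse, inverting \eqref{Fbv} to recover from the Gram data an orthonormal basis $\{F_{n,b}\}$ of traceless Hermitian operators. You simply supply the details the paper omits --- the within-block Gram matrix $\frac{c}{d-1}(dI-J)$ of rank $d-1$ and the identity $(d+\sqrt{d})^2=d(1+\sqrt{d})^2$ that makes the reconstructed $F_{n,b}$ orthonormal --- and these all check out.
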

\begin{proof}
To show that the $P^{(b)}_n$s of Eq.~(\ref{PinF}) indeed form MUM, one can verify that they satisfy the definition of MUM, Eq.~(\ref{muPOM}), with efficiency parameter 
\begin{equation}\label{effpara}
\kappa=\frac1{d}+t^2(1+\sqrt{d})^2(d-1).
\end{equation}
The fact that all MUM have the structure of Eq.~(\ref{PinF}) follows by assuming this structure and showing that indeed the $F_{n,b}$s form  a basis for the space of traceless hermitian operators. 
\end{proof}

The efficiency parameter, given in Eq.~(\ref{effpara}), obtains its maximal value $\kappa=1$ for $t^2(1+\sqrt{d})^2=\frac1{d}$. In this case the MUM are actually MUB.  Clearly, the efficiency parameter is determined by the free parameter $t$, which in turn is set such that $P^{(b)}_n\geq0$. This requirement implies that the range of $t$ is 
\begin{equation}\label{tIneq}
-\frac1{d}\frac1{\lambda_{\rm max}}\leq t\leq\frac1{d}\frac1{|\lambda_{\rm min}|}
\end{equation}
where $\lambda_{\rm min}=\min_b\lambda_{\rm min}^{(b)}$,  $\lambda_{\rm max}=\max_b\lambda_{\rm max}^{(b)}$, and $\lambda_{\rm min}^{(b)}$ and $\lambda_{\rm max}^{(b)}$ are the smallest (negative) and largest (positive) eigenvalues of the operators  $F^{(b)}_n$ of Eq.~(\ref{Fbv}) with $n=1,2,\ldots,d$, respectively. [Since  $\tr (F^{(b)}_n)=0$, these operators must have both negative and positive eigenvalues.] The larger the $t$, in its magnitude, the larger the efficiency parameter would eventually be.  Therefore we define the optimal  $t$ to be
\begin{equation}\label{tbopt}
t_{\rm opt}=%
\begin{cases}
\frac1{d}\frac1{|\lambda_{\rm min}|}& \text{for } |\lambda_{\rm min}|<\lambda_{\rm max}, \\
-\frac1{d}\frac1{\lambda_{\rm max}}& \text{for } \lambda_{\rm max}<|\lambda_{\rm min}|.
\end{cases}
\end{equation}
The optimal efficiency parameter of a MUM is given by $\kappa_{\rm opt}=\frac1{d}+t_{\rm opt}^2(1+\sqrt{d})^2(d-1)$. This choice of    $t_{\rm opt}$ ensures  $P^{(b)}_n\geq0$ $\forall b$ and $n$.
The  value of $t_{\rm opt}$ depends on the particular choice of the operator basis $F^{(b)}_n$. For a given choice,
$\kappa_{\rm opt}$ sets an upper bound on how close we can get to MUB. The question whether a complete set of MUB exists in any finite dimension is then translated to the question whether there exists an operator basis for which $\kappa_{\rm opt}=1$.  In the Supplementary Information section we show that for the case where the traceless hermitian operator basis is the generalized Gell-Mann operator basis, an optimal efficiency parameter can be analytically calculated, $\kappa_{\rm opt}=\frac1{d}+\frac2{d^2}$. In a sense, the Gell-Mann operator basis is not a  good choice for a basis, since the optimal efficiency parameter very close to its minimal value $\frac1{d}$ (except for $d=2$, for which it equals 1). Of course, one may resort to numerical methods to estimate better value of $\kappa_{\rm opt}$ in a given dimension. 

%\section{The reconstruction operators}
The MUM, share various properties with MUB. For example, the set of the $d+1$ MUM of Eq.~(\ref{muPOM}) are informationally complete, that is, any state of the system is determined completely by the MUM outcomes' probabilities, $p^{(b)}_n=\tr(P^{(b)}_n\rho)$. In fact, much like MUB, the MUM provide a linear inversion relation,
\begin{equation}\label{rho in terms of R}
\rho=\sum_{n,b}p^{(b)}_nR^{(b)}_n,
\end{equation}
where the reconstruction operators $R^{(b)}_n$ associated with the MUM are linear function of the outcomes $P^{(b)}_n$, 
\begin{equation}
R^{(b)}_n=\frac{d-1}{\kappa d-1}\left(P^{(b)}_n-\frac{d-\kappa}{d^2-1}\right).
\end{equation}
The reconstruction operators satisfy their defining property,
\begin{equation}
\tr(\rho P^{(b)}_{n})=\sum_{n',b'}p^{(b')}_{n'}\tr(R^{(b')}_{n'} P^{(b)}_{n})=p^{(b)}_n.
\end{equation} 
In the Supplementary Information section we are considering a measurement related to MUM with $d^2$ outcomes which, as well, provides a linear inversion formula for quantum state tomography.

%\section{Entropic uncertainty relation}
Beyond the mathematical generalization of MUB, the MUM have a  physical significance of their own. As discussed above, the MUM elements, $P^{(b)}_n$ of Eq.~(\ref{PinF}), can  be regarded as quantum states, and the outcome of measuring a state of one measurement in any other MUM, is completely random. We now formalize this aspect of complementarity of the MUM by showing that they satisfy a non-trivial entropic uncertainty relation similar to the one satisfied by MUB. When the latter exist, they satisfy the strong entropic uncertainty relation  \cite{ivanovic92,wehner10},
\begin{equation}\label{mubIneq}
\frac1{d+1}\sum_{b=1}^{d+1}H({\cal B}^{(b)},\rho)\geq\log\frac{d+1}{2},
\end{equation}
where $H({\cal B}^{(b)},\rho)=-\sum_{n=1}^{d}p^{(b)}_n\log p^{(b)}_n$ is the Shannon entropy of the probability distribution, $p^{(b)}_n=\matele{\psi^{(b)}_n}{\rho}{\psi^{(b)}_n}$, associated with measuring the system $\rho$ in MUB measurement ${\cal B}^{(b)}=\{ \ket{\psi^{(b)}_n}\bra{\psi^{(b)}_n}|n=1,2,\ldots,d\}$.

\begin{theorem}\label{tm:2}
The complete set of $d+1$ MUM $\{{\cal P}^{(b)}\}$ of Eq.~(\ref{PinF}) in a $d$-dimensional Hilbert space, satisfies the entropic uncertainty relation,
\begin{equation}\label{thm2}
\frac1{d+1}\sum_{b=1}^{d+1}H({\cal P}^{(b)},\rho)\geq\log\frac{d+1}{1+\kappa},
\end{equation}
where $\kappa$ is the efficiency parameter. 
\end{theorem}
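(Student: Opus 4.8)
The plan is to follow the route behind the MUB relation~(\ref{mubIneq}): control the sum of collision probabilities and convert it to Shannon entropy. Since the Shannon entropy never falls below the order-$2$ R\'enyi (collision) entropy, $H({\cal P}^{(b)},\rho)\geq-\log\sum_{n}(p^{(b)}_n)^2$; and because $-\log$ is convex, Jensen's inequality applied to the uniform average over $b$ gives
\begin{equation}
\frac1{d+1}\sum_{b=1}^{d+1}H({\cal P}^{(b)},\rho)\geq-\log\Big(\frac1{d+1}\sum_{b,n}(p^{(b)}_n)^2\Big).
\end{equation}
Hence it suffices to establish the index-of-coincidence bound $\sum_{b,n}(p^{(b)}_n)^2\leq 1+\kappa$; since $-\log$ is decreasing, the theorem then follows immediately.

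To obtain this bound, write $p^{(b)}_n=\tr(P^{(b)}_n\rho)=\tfrac1d+t\,\tr(F^{(b)}_n\rho)$ using Eq.~(\ref{PinF}). Squaring and summing over $n$, the cross term is proportional to $\tr\big((\textstyle\sum_nF^{(b)}_n)\rho\big)=0$ by Eq.~(\ref{sumFbv}), so $\sum_{b,n}(p^{(b)}_n)^2=\tfrac{d+1}{d}+t^2\sum_{b,n}\tr(F^{(b)}_n\rho)^2$. The core step is to evaluate $\sum_{b,n}\tr(F^{(b)}_n\rho)^2$. Expand $\rho=\tfrac1d+\sum_{m,b'}x^{(b')}_m F_{m,b'}$ in the orthonormal traceless basis, with $x^{(b')}_m=\tr(F_{m,b'}\rho)$ and $\tr\rho^2=\tfrac1d+\sum_{m,b'}(x^{(b')}_m)^2$. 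Inserting the definition~(\ref{Fbv}) of $F^{(b)}_n$ in terms of the row sum $F^{(b)}$ and the single element $F_{n,b}$, and writing $S^{(b)}=\sum_{m=1}^{d-1}x^{(b)}_m$, the sum over $n=1,\dots,d$ becomes a quadratic form in $S^{(b)}$ and the $x^{(b)}_m$ whose $(S^{(b)})^2$-coefficient is $(d-1)-2(d+\sqrt d)+(1+\sqrt d)^2=0$ and therefore drops out, leaving $\sum_{n}\tr(F^{(b)}_n\rho)^2=d(1+\sqrt d)^2\sum_{m=1}^{d-1}(x^{(b)}_m)^2$. Summing over $b$ (there are no cross-$b$ terms, since $\tr(F^{(b)}_nF^{(b')}_{n'})=0$ for $b\neq b'$) gives $\sum_{b,n}\tr(F^{(b)}_n\rho)^2=d(1+\sqrt d)^2\big(\tr\rho^2-\tfrac1d\big)$; equivalently, the superoperator $\rho\mapsto\sum_{b,n}F^{(b)}_n\,\tr(F^{(b)}_n\rho)$ equals $d(1+\sqrt d)^2$ times the projection onto traceless operators.

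Assembling, and using $t^2(1+\sqrt d)^2(d-1)=\kappa-\tfrac1d$ from Eq.~(\ref{effpara}),
\begin{equation}
\sum_{b,n}(p^{(b)}_n)^2=\frac{d+1}{d}+\frac{d\kappa-1}{d-1}\Big(\tr\rho^2-\frac1d\Big)\leq 1+\kappa,
\end{equation}
where the inequality uses $\kappa\geq\tfrac1d$, so that the prefactor is nonnegative, together with $\tr\rho^2\leq1$. Feeding this into the first-paragraph bound proves Eq.~(\ref{thm2}). The main obstacle is the bookkeeping in the core step: organizing the quadratic form produced by Eq.~(\ref{Fbv}) and verifying that the ``row-sum'' contributions cancel identically; once that identity is secured, the rest is the standard collision-entropy argument. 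As consistency checks, $\kappa=1$ reproduces~(\ref{mubIneq}), and a pure state $\rho$ saturates the coincidence bound, so~(\ref{thm2}) cannot be improved in general.
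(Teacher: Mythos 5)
Your proof is correct and follows essentially the same route as the paper's: the R\'enyi-2 lower bound on the Shannon entropy, Jensen's inequality for $-\log$, and an explicit evaluation of the index of coincidence $\sum_{b,n}(p^{(b)}_n)^2$, whose value $1+\kappa$ on pure states is the heart of the matter. The only difference is minor but welcome: you expand $\rho$ in the orthonormal basis $F_{m,b}$ and keep $\tr(\rho^2)$ explicit, so the bound $\sum_{b,n}(p^{(b)}_n)^2\leq 1+\kappa$ is obtained for all states at once, whereas the paper computes the equality only for pure states and leaves the extension to mixed states implicit.
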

The proof is given in the Supplementary Information section. Note that the inequality of Eq.~(\ref{mubIneq}) abide by the MUB is a particular instance of Eq.~(\ref{thm2}) for the maximal value of the parameter efficiency $\kappa$, $\kappa=1$. The MUM with an optimal efficiency parameter, $\kappa_{\rm opt}$, satisfy Eq.~(\ref{thm2}) with $\kappa=\kappa_{\rm opt}$. 

The inequality $\log\frac{d+1}{1+\kappa}\geq\log\frac{d+1}{2}$ for $\kappa\in(\frac1{d},1]$, apparently indicates that the smaller $\kappa$ the stronger the inequality. This counter intuitive result, can be resolved by noting that for  the minimum value of $\kappa=\frac1{d}$, the measurements operators are the completely mixed state, and as such do not provide any information about the state of the system; hence the uncertainty is the largest. 
This implies that the uncertainty of each measurement must be taken into account before concluding about mutual unbiasedness of the MUM.  Indeed, there is a distinction between the uncertainty relation of Eq.~(\ref{mubIneq}) abide by MUB and the   uncertainty relation of Eq.~(\ref{thm2}) abide by general MUM. This distinction originates in the difference between a measurement of a basis and a general measurement, such as MUM. To be more precise, lets define a state-dependent uncertainty associated with a measurement ${\cal M}$ as, say, the Shannon entropy of its probability distribution, $H({\cal M},\rho)$.
Now suppose that  ${\cal M}$ is a measurement of a basis. In this case there always exist states--- the basis states--- for which the state-dependent uncertainty is zero.  
It is therefore reasonable  to define a state-independent uncertainty associated with a measurement as  the minimum of the state-dependent measurement uncertainty,
\begin{equation}
\Delta{\cal M}=\min_\rho H({\cal M},\rho).
\end{equation}
A measurement has in general an uncertainty $\Delta{\cal M}$ larger than  than zero.
Going back to the uncertainty relations, we conclude that since the MUB are measurements with zero uncertainty each, the entropic uncertainty relation of  Eq.~(\ref{mubIneq}) captures the complementarity, or unbiasedness, aspect of the measurements. In contrast, each one of the MUM are uncertain in general. Therefore, though  Eq.~(\ref{thm2}) provides an entropic uncertainty relation, it includes the `self' uncertainty of each measurement. To  account for the unbiasedness feature of the MUM we must subtract the uncertainty of each measurement,
\begin{equation}
\upsilon=\frac1{d+1}\sum_{b=1}^{d+1}[H({\cal P}^{(b)},\rho)-\Delta{\cal P}^{(b)}],
\end{equation}
so that the proper complementarity entropic uncertainty reads
\begin{equation}
\upsilon\geq\log\frac{d+1}{1+\kappa}-\frac1{d+1}\sum_{b=1}^{d+1}\Delta{\cal P}^{(b)}.
\end{equation}
Techniques, for example, introduced in~\cite{friedland13} may be used to calculate $\Delta{\cal P}^{(b)}$.
We have numerically searched for $\Delta{\cal P}^{(b)}$ for the case where the MUM are constructed by the generalized Gell-Mann operator basis as given in the Supplementary Information section. Searching over $10^6$ states in six-dimensional Hilbert space of random rank, we found that the MUM satisfy a non-trivial complementarity relation, as described in Fig.~\ref{fig:d6} and its caption. 
\begin{figure}[t]
\centering
\includegraphics[width=1\linewidth]{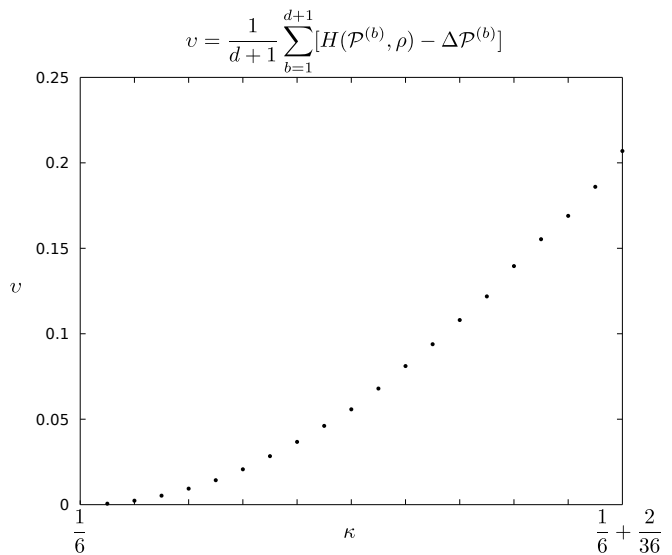}
\caption{Measure of unbiasedness $\upsilon$ as a function of $\kappa$ for MUM on a six-dimensional Hilbert space constructed form the Gell-Mann operator basis described in the Supplementary Information section.  The range of $\kappa$ plotted is  $\frac{1}{d}<\kappa\leq\kappa_{\rm opt}=\frac{1}{d}+\frac{2}{d^2}$. For a given $\kappa$ value, the complementarity on the MUM is larger than  $\upsilon$. }
\label{fig:d6}
\end{figure}%

%\section{Summary and conclusions}
To conclude, in this work we generalized the notion of unbiasedness of bases, MUB, to general measurements, MUM. We constructed the complete set, i.e., $d+1$ MUM in a $d$-dimensional Hilbert space, where MUB, when exist, are a particular case thereof.   This construction can be used to study, either analytically or numerically,  the problem of the existence of MUB in non power of prime dimensions, and may be helpful to obtain bounds on how close we can get to MUB in these cases. We showed that this mathematical generalization captures the physical essence of unbiased bases in two principle aspects. First, that the probability of obtaining any outcome upon preparing the system in one of measurement elements, and measuring it in the other measurement, is independent of the prepared state and the measurement outcome. And second, a complete set of MUM satisfies a non-trivial entropic uncertainty relation, similar to MUB. Moreover, this relation answers an interesting question regarding the existence of entropic uncertainty relation among $d+1$ measurement settings in finite $d$-dimensional Hilbert space~\cite{wehner10}. 

\emph{Acknowledgments:---}
A.K. research is supported by NSF Grant PHY-1212445. G.G.  research is supported by NSERC.

\begin{titlepage}
\center{\large\textbf{Supplementary Information}\\}
\center{~{ }\\}

\end{titlepage}

\onecolumngrid

\appendix
%\lowercase

\section{Optimal efficiency parameter for the case of the generalized Gell-Mann operator basis}%\label{app:gm}
In this appendix we explicitly calculate the optimal efficiency parameter, $\kappa_{\rm opt}$, for the case where the operator basis, $F_{n,b}$, are the well known generalized Gell-Mann (GM) operators (known also as the generalized Pauli operators). For this aim we first consider the range of $t^{(b)}$s making  $P^{(b)}_n\geq0$.

The generalized GM operators are a set of $d^2-1$ operators which form a basis for traceless hermitian operators acting on a $d$-dimensional Hilbert space. We can label them by two indexes $n,m$ each taking on integer values, $n,m=1,2,\ldots,d$, such that
\begin{align}\label{gm1}
G_{n,m}=\begin{cases}
  \frac1{\sqrt2}(\ket{n}\bra{m}+\ket{m}\bra{n})& \text{for } n<m, \\
   \frac\ii{\sqrt2}(\ket{n}\bra{m}-\ket{m}\bra{n})& \text{for } m<n
    \end{cases}
\end{align}
and for $n=m$, with $n=1,2,\ldots,d-1$,
\begin{equation}\label{gm2}
G_{n,n}=  \frac1{\sqrt{n(n+1)}}(\sum_{k=1}^n\ket{k}\bra{k}-n\ket{n+1}\bra{n+1}). 
\end{equation}
The constant in Eqs.~(\ref{gm1}) and~(\ref{gm2}) were chosen such that  $\tr(G_{n,m}^2)=1$ for all possible $n$ and $m$.
It is helpful to think of these operators as elements of a $d\times d$ matrix whose most bottom-right element of the matrix is not `populated' with an operator,
\begin{align}\label{gMat}
\begin{pmatrix}
   G_{1,1} & G_{2,1} & \cdots &G_{d,1} \\
  G_{1,2} & G_{2,2} & \cdots &G_{d,2} \\
  \vdots & \vdots & \vdots &\vdots \\
   G_{1,d} & \cdots & \cdots &[\;]\\
 \end{pmatrix}.
  \end{align}
There is a matter of arbitrariness of how to make the correspondence between the $F$s of Eq.~(\ref{Fs}) and the GM operators. We choose here the following correspondence:
\begin{align}\label{Fg2}
\begin{matrix}
  F_{1,1} & F_{2,1} & \cdots &F_{d-1,1}\\
  F_{1,2} & F_{2,2} & \cdots &F_{d-1,2}\\
  \vdots & \vdots & \vdots &\vdots \\
  F_{1,d+1} & F_{2,d+1} & \cdots &F_{d-1,d+1}
 \end{matrix}\leftrightarrow\begin{matrix}
   G_{2,1} & G_{3,1} & \cdots &G_{d,1} \\
  G_{1,2} & G_{3,2} & \cdots &G_{d,2} \\
  \vdots & \vdots & \vdots &\vdots \\
   G_{1,1} & \cdots & \cdots &G_{d-1,d-1}
 \end{matrix}
  \end{align}
where for the first $d$ rows of Eq.~(\ref{Fs}), we equate the $F$s on the $b$th row  with the $G$s on the $b$th row in the matrix of Eq.~(\ref{gMat}) except the diagonal matrix element. The $F$s on the last, $d+1$th, row correspond to diagonal elements $G_{n,n}$.

Lets now calculate the eigenvalues of $F^{(b)}_n$, or equivalently, the eigenvalues of $F^{(b)}$ and $F^{(b)}-\alpha F_{n,b}$, c.f. Eq.~(\ref{Fbv}). Throughout this section we denote $d+\sqrt{d}\equiv\alpha$, and $1+\sqrt{d}\equiv\beta$. We start with $F^{(1)}$, 
\begin{equation}
F^{(1)}=\frac1{\sqrt2}(\sum_{j=2}^d \ket{1}\bra{j}+\ket{j}\bra{1}),
\end{equation}
whose $d\times d$ matrix representation is
\begin{equation}
F^{(1)}=\frac1{\sqrt2}\begin{pmatrix}
  0 & 1 & \cdots &1 \\
  1 & 0 & \cdots &0 \\
  \vdots & \vdots & \ddots &\vdots \\
    1 & 0 & \cdots &0 \\
 \end{pmatrix}.
\end{equation}
Its eigenvalues $\lambda$ can be found by the eigenvalues equation
\begin{equation}\label{F1det}
{\rm Det}\begin{pmatrix}
  -\lambda & \frac1{\sqrt2} & \cdots &\frac1{\sqrt2} \\
  \frac1{\sqrt2} & -\lambda & \cdots &0 \\
  \vdots & \vdots & \ddots &\vdots \\
    \frac1{\sqrt2} & 0 & \cdots &-\lambda 
 \end{pmatrix}=0.
\end{equation}
To solve this equation we use the identity
\begin{equation}\label{detID}
{\rm Det}\begin{pmatrix}
  A & B  \\
  C & D 
  \end{pmatrix}={\rm Det}(D){\rm Det}(A-BD^{-1}C),
\end{equation}
which holds for invertible $D$. By identifying, 
\begin{align}
 A &=-\lambda, \; B=\frac1{\sqrt2}(1,1,\ldots,1)=C^\dagger,\nn
  D&= \begin{pmatrix}
  -\lambda & \cdots &0 \\
   \vdots & \ddots &\vdots \\
   0 & \cdots &-\lambda 
 \end{pmatrix},
\end{align}
we can write the eigenvalues equation, Eq.~(\ref{F1det}), as
\begin{equation}
\lambda^{d-2}(-2\lambda^2+d-1)=0,
\end{equation}
from which we read the eigenvalues of $F^{(1)}$,
\begin{equation}\label{eigvF1}
\lambda=\begin{cases}
   \pm \sqrt{\frac{d-1}{2}}& \text{multiplicity } 1 \\
   0       & \text{multiplicity }  d-2
  \end{cases}.
\end{equation}
Next we calculate the eigenvalues of $F^{(1)}_1=F^{(1)}-\alpha G_{2,1}$.
The matrix representation of $F^{(1)}_1$ is given by
\begin{equation}
F^{(1)}_1=\begin{pmatrix}
  0 & \frac{1-\alpha}{\sqrt2} &\frac1{\sqrt2}& \cdots &\frac1{\sqrt2} \\
 \frac{1-\alpha}{\sqrt2} & 0 & \cdots & \cdots &0 \\
  \frac1{\sqrt2} & \vdots & \ddots &\ddots &\vdots \\
  \vdots & \vdots & \ddots &\ddots &\vdots\\
    \frac1{\sqrt2} & 0 & \cdots &\cdots&0 \\
 \end{pmatrix},
\end{equation}
and therefore the eigenvalue equation reads
\begin{equation}
{\rm Det}\begin{pmatrix}
  -\lambda & \frac{1-\alpha}{\sqrt2} &\frac1{\sqrt2}& \cdots &\frac1{\sqrt2} \\
  \frac{1-\alpha}{\sqrt2} & -\lambda & \cdots & \cdots &0 \\
  \frac1{\sqrt2} & \vdots & \ddots &\ddots &\vdots \\
  \vdots & \vdots & \ddots &\ddots &\vdots\\
    \frac1{\sqrt2} & 0 & \cdots &\cdots&-\lambda \\
 \end{pmatrix}=0.
\end{equation}
Using the identity of Eq.~(\ref{detID}), with the matrices $A,B,C$ and $D$ defined similarly as before we obtain the polynomial,
\begin{equation}
\lambda^{d-2}(-2\lambda^2+(1-\alpha)^2+d-2)=0.
\end{equation}
from which we read the eigenvalues of $F^{(1)}_1$,
\begin{equation}\label{eigvF11}
\lambda=\begin{cases}
   \pm \sqrt{\frac{(1-\alpha)^2+d-2}{2}}& \text{multiplicity } 1 \\
   0       & \text{multiplicity }  d-2
  \end{cases}.
\end{equation}
These eigenvalues are also the eigenvalues of $F^{(1)}_n$ for $n=1,2,\ldots,d$ since the matrix representation of $F^{(1)}_n$ is the same as the matrix of $F^{(1)}_1$ up to the location of $1-\alpha$ in the first row and column. Actually the eigenvalues of $F^{(1)}$ and $F^{(1)}_1$ are also the eigenvalues of $F^{(b)}$ and $F^{(b)}_n$ for all $b=1,2,\ldots,d$, and $n=1,2,\ldots,d$. This follows by transforming the matrix representation of $F^{(b)}$ (or of $F^{(b)}_n$) to have a similar structure as the matrix of $F^{(1)}$ ($F^{(1)}_n$) by even number of rows and column permutation. Therefore, the eigenvalues equation, and therefore the eigenvalues, of the (permuted) $F^{(b)}$ ($F^{(1)}_n$) are the same as the eigenvalues of $F^{(1)}$ ($F^{(1)}_n$).
Since $\beta\sqrt{\frac{d-1}{2}}=\sqrt{\frac{(1-\alpha)^2+d-2}{2}}$ we conclude that the eigenvalues of $F^{(b)}_n$ of Eq.~(\ref{Fbv}) are $(\pm\beta\sqrt{\frac{d-1}{2}},0)$ for $b=1,2,\ldots,d$, and $n=1,2,\ldots,d$, and therefore,
\begin{equation}
-\frac1{d\beta}\sqrt{\frac2{d-1}}\leq t^{(b)}\leq\frac1{d\beta}\sqrt{\frac2{d-1}}\; \text{for } b=1,2,\ldots,d,\; \forall n.
\end{equation}

We are now considering the last row of the correspondence of Eq.~(\ref{Fg2}). Here the $F^{(b=d+1)}_n$ are all represented as diagonal matrices and therefore it is easy to read their eigenvalues. By noting that the maximal eigenvalue of $F^{(b)}$ equals $\sum_{n=1}^{d-1}\frac1{\sqrt{n(n+1)}}$, its minimal eigenvalue equals $-\sqrt{(d-1)/d}$, and inspecting the various other eigenvalues it is not difficult to show that
\begin{equation}
-\sqrt{\frac{d-1}{d}}-\frac\alpha{\sqrt2}\leq t^{(d+1)}\leq\sum_{n=1}^{d-1}\frac1{\sqrt{n(n+1)}}+\frac\alpha{\sqrt{2}}.
\end{equation}
 Following to the discussion just below Eq.~(\ref{tbopt}), we conclude that $t_{\rm opt}=\frac1{d\beta}\sqrt{\frac2{d-1}}$, and therefore $\kappa_{\rm opt}=\frac1{d}+\frac2{d^2}$.

\section{A measurement with $d^2$ outcomes}
For completeness, let us consider the following a measurement with $d^2$ outcomes which is related to the MUM,  
\begin{align}
\Pi^{(b)}_n&=\frac1{d+1}P^{(b)}_n,\; b=1,2,\ldots,d+1,\; n=1,2,\ldots, d-1\nonumber\\
\Pi_{d^2}&=1-\sum_{b=1}^{d+1}\sum_{n=1}^{d-1}\Pi^{(b)}_n=\frac1{d+1}\sum_{b=1}^{d+1} P^{(b)}_d.
\end{align}
We can regard this measurement as our MUM but collecting all of the last outcomes in each measurement, $P^{(b)}_d$ into one element $\Pi_{d^2}$. The probability of obtaining the outcome $\Pi_{d^2}$ is $\frac1{d+1}\sum_{b=1}^{d+1} p^{(b)}_d$, where $p^{(b)}_d$ is the probability to obtain the outcome $P^{(b)}_d$. This implies that we can rewrite Eq.~(\ref{rho in terms of R}) as 
\begin{align}
\rho=(d+1)\Bigl(\sum_{b=1}^{d+1}\sum_{n=1}^{d-1}\frac{p^{(b)}_n}{d+1}R^{(b)}_n+\sum_{b=1}^{d+1}\frac{p^{(b)}_d}{d+1}R^{(b)}_d\Bigr),
\end{align}
from which we read the reconstruction operators of the $\Pi$s
\begin{align}
\Theta^{(b)}_n&=(d+1)R^{(b)}_n,\; b=1,2,\ldots,d+1,\; n=1,2,\ldots, d-1,\nn
\Theta_{d^2}&=(d+1)\sum_{b=1}^{d+1}R^{(b)}_d.
\end{align}

\section{Proof of Theorem~2}
\begin{proof} 
We first note that the $P^{(b)}_n$s, upon subtracted the completely mixed states, $P^{(b)}_n-\frac1{d}=F^{(b)}_n$, define $d+1$ orthogonal subspaces of operators acting on a $d$-dimensional Hilbert space. Therefore, any quantum state can be written as 
\begin{equation}
\rho=\frac1{d}+\sum_{b=1}^{d+1}\sum_{n=1}^{d}r^{(b)}_nF^{(b)}_n.
\end{equation}
In particular, for pure states $\tr (\rho^2)=1$ which together with the first identity of Eq.~(\ref{sumFbv}) implies
\begin{equation}\label{pure}
\sum_{b=1}^{d+1}\left[d\sum_{n=1}^{d}\left(r^{(b)}_n\right)^2-\left(\sum_{n=1}^{d}r^{(b)}_n\right)^2\right]=\frac{d-1}{d(1+\sqrt{d})^2}.
\end{equation}
Next, consider the MUM ${\cal P}^{(b)}$ of Eq.~(\ref{PinF}). The probability distribution of ${\cal P}^{(b)}$ given a state $\rho$ is 
\begin{equation}\label{pbn}
p^{(b)}_n=\tr(P^{(b)}_n\rho)=\frac1{d}+t(1+\sqrt{d})^2\Bigl(d r^{(b)}_n-\sum_{n'=1}^{d}r^{(b)}_{n'}\Bigr).
\end{equation}
To prove the theorem we use the inequality \cite{ivanovic92,wehner10},
\begin{equation}
H({\cal P}^{(b)},\rho)=-\sum_{n=1}^{d}p^{(b)}_n\log p^{(b)}_n\geq-\log \sum_{n=1}^{d}(p^{(b)}_n)^2,
\end{equation}
which holds for any probability distribution. Summing over $d+1$ MUM and dividing by $d+1$ we obtain,
\begin{align}\label{mainIneq}
\frac1{d+1}\sum_{b=1}^{d+1}H({\cal P}^{(b)},\rho)&\geq-\frac1{d+1}\sum_{b=1}^{d+1}\log \sum_{n=1}^{d}(p^{(b)}_n)^2\nn&\geq-\log \frac1{d+1}\sum_{b=1}^{d+1}\sum_{n=1}^{d}(p^{(b)}_n)^2,
\end{align}
where the second inequality is a consequence of the concavity of the log function.
By using Eqs.~(\ref{sumFbv}),~(\ref{pure}), and~(\ref{pbn}), one can show that for pure states
\begin{equation}
\sum_{b=1}^{d+1}\sum_{n=1}^{d}(p^{(b)}_n)^2=\frac1{d}(d+1)+t^2(1+\sqrt{d})^2(d-1)=1+\kappa.
\end{equation}
Plugging this into Eq.~(\ref{mainIneq}) we arrive at Eq.~(\ref{thm2}).
\end{proof}

\end{document}